\documentclass[12pt]{article}
\usepackage{amsmath, amsthm, amsfonts, amssymb, amsxtra}
\usepackage{graphicx,psfrag,epsf}
\usepackage{enumerate}
\usepackage{natbib}
\usepackage{url} 
\usepackage{bm, bbm}
\usepackage{booktabs}
\usepackage[pagebackref=false]{hyperref} 
\usepackage{xcolor}
\usepackage{subcaption}
\usepackage{tikz}
\usepackage{physics} 
\usepackage[]{lineno}
\usepackage[margin = 1in]{geometry}
\usepackage{setspace}

\newtheorem{theorem}{Theorem}[section]
\newtheorem{corollary}{Corollary}[theorem]
\newtheorem{lemma}[theorem]{Lemma}

\theoremstyle{definition}
\newtheorem{definition}[theorem]{Definition}

\newcommand{\B}{\mathrm{B}}

\graphicspath{ {img/}}
\hypersetup{
    colorlinks = true,
    linkcolor  = blue,
    urlcolor   = blue,
    citecolor  = blue,
    pdftitle   = {Isotropic covariance functions for sets} 
  }

\begin{document}

\title{On the Validity of Isotropic Covariance Functions for Set-indexed Random
  Fields}

\author{
  Lucas da Cunha Godoy\thanks{Department of Ecology and Evolutionary Biology,
    University of California Santa Cruz, USA} \and
  Marcos Oliveira Prates\thanks{Department of Statistics, Universidade Federal
    de Minas Gerais, Brasil} \and
  Fernando Andr\'es Quintana\thanks{Department of Statistics, Pontificia
  Universidad Cat\'{o}lica de Chile, Santiago, Chile} \and
  Jun Yan\thanks{Department of Statistics, University of
    Connecticut, USA}
}

\maketitle
\begin{abstract}
  Distances between sets arise naturally when modeling stochastic dependence on
  collections of spatial supports, including settings with point-referenced and
  areal observations. However, commonly used constructions of distances on sets,
  including those derived from the Hausdorff distance, generally fail to be
  conditionally negative definite, precluding their use in isotropic covariance
  models. We propose the ball–Hausdorff distance, defined as the Hausdorff
  distance between the minimum enclosing balls of bounded sets in a metric
  space. For length spaces, we derive an explicit representation of this
  distance in terms of the associated centers and radii. We show that the
  ball–Hausdorff distance is conditionally negative definite whenever the
  underlying metric is conditionally negative definite. By Schoenberg’s theorem,
  this implies an isometric embedding into a Hilbert space and guarantees the
  validity of broad classes of isotropic covariance functions, including the
  Mat\'{e}rn and powered exponential families, for set-indexed random fields.
  The construction reduces dependence between sets to low-dimensional geometric
  summaries, leading to substantial simplifications in covariance evaluation.

{\noindent\it Keywords:\/}
Change of support, Geostatistics, Hausdorff distance, Length spaces, Spatial data fusion
\end{abstract}
\doublespacing{}

\section{Introduction}

Many modern spatial analyses require defining dependence across observations
indexed by spatial units with heterogeneous locations, sizes, and
geometries. Several closely related statistical problems arise from this
mismatch. Change of support refers to predicting a process indexed on one
spatial scale using observations collected on another~\citep{gelfand2001change},
with downscaling denoting the special case in which information aggregated over
coarse spatial units is used to infer values at a finer
resolution~\citep{zheng2025spatial}. When observations on the same phenomenon
are simultaneously available at multiple spatial resolutions, the problem is
commonly framed as spatial data fusion~\citep{moraga2017geostatistical}. In
addition, response and explanatory variables are often observed on incompatible
spatial supports, a setting known as spatial
misalignment~\citep{godoy2026voronoi}. Although these settings are often treated
separately, they expose a common methodological challenge: constructing valid
dependence structures on collections of spatial sets rather than on points.

Most existing approaches to multi-resolution spatial data model dependence by
treating observations at different supports as aggregations of a common
underlying continuous process, which leads to substantial computational and
theoretical limitations. Areal measurements are typically represented as
integrals of an underlying stochastic process~\citep{fuentes2005model}. However,
the implied finite-dimensional probability density functions involve stochastic
integrals that are analytically intractable even when the underlying process is
a Gaussian Process~(GP). Consequently, inferences rely on numerical
approximations or Monte Carlo integration~\citep{gelfand2001change}, which are
computationally demanding even for moderately sized datasets and introduce
hard-to-quantify bias~\citep{goncalves2018exact}. An alternative strategy avoids
aggregation by defining stochastic processes directly on collections of spatial
sets. The Hausdorff distance has been explored as a basis for such
constructions~\citep{godoy2026statistical}, but its direct construction does not
guarantee positive definite (PD) functions, leaving the existence and validity
of the induced process unresolved. This lack of theoretical guarantee, coupled
with high computational costs~\citep{knauer2011directed}, necessitates a more
robust framework for set-to-set dependence.

Constructing valid PD functions becomes substantially more delicate once the
underlying index space departs from Euclidean geometry. While many PD functions
are known for~\(\mathbb{R}^p\) equipped with the Euclidean
distance~\citep[Ch.~2.5.1]{cressie1993statistics}, their positive definiteness
may not extend to other spaces~\citep{li2023inference}. In fact, a function that
is PD on~\(\mathbb{R}^1\) may not remain PD on~\(\mathbb{R}^2\), although the reverse is
true~\citep[p.~22]{cressie1993statistics}. This geometric sensitivity has
motivated recent work on characterizing PD covariance functions on non-Euclidean
spaces. The particular case of the sphere has been extensively studied~(see
\citet{porcu2021nonparametric} and references therein). Certain covariance
functions exhibit the notable property of being PD on both the
\((p-1)\)-dimensional sphere~(\(\mathbb{S}^{p - 1}\)) and \(p\)-dimensional
Euclidean space~\citep{porcu2016spatio}. More recently, analogous questions have
arisen for covariance functions indexed by graphs and their
edges~\citep{anderes2020isotropic}. Despite these advances, there is no general
construction of distances on spatial sets that both ensures positive
definiteness of isotropic covariance functions and avoids the computational
burdens associated with aggregation-based models.

To address this gap, we propose a framework for constructing isotropic
covariance functions directly on collections of spatial sets, accommodating
point-referenced, areal, and mixed spatial data within a unified formulation. We
aim to circumvent the analytical and computational intractability of
aggregation-based models while ensuring mathematical validity. To this end, we
define a pseudometric, the \emph{ball-Hausdorff distance}, based on the
Hausdorff distance between the minimum enclosing balls of sets. Under mild
assumptions, we derive a closed-form expression for this distance, allowing for
fast and exact computation. Moreover, the proposed distance is conditionally
negative definite~(CND) whenever the underlying domain metric is CND.\@ By
Schoenberg's theorem~\citep{schoenberg1938metric}, this implies that the
square-root of the ball-Hausdorff distance admits an isometric embedding into a
Hilbert space, thereby providing a rich class of valid isotropic covariance
functions indexed by spatial sets. As we demonstrate in numerical studies, this
theoretical validity is a critical distinction, as the standard Hausdorff
distance frequently yields invalid covariance structures in identical settings.

The remainder of the paper is organized as follows. Section~\ref{sec:notation}
establishes the necessary preliminary results, including the properties of CND
functions and their relationship with isometric embeddings on Hilbert
spaces. Section~\ref{sec:main} presents our main theoretical contribution: the
definition of the ball-Hausdorff distance, the proof of its CND property for
length spaces, and valid covariance functions for this distance. In
Section~\ref{sec:examples}, we provide numerical benchmarks across Euclidean and
spherical domains to analyze the agreement between the ball-Hausdorff and
standard Hausdorff distances, quantify computational speedups, and empirically
show the positive definiteness of the resulting covariance matrices. Finally, we
conclude with a discussion of the results and future directions for research in
Section~\ref{sec:disc}.

\section{Preliminaries}\label{sec:notation}

Let~\(D\) denote an index set and~\(d \, : \, D \times D \to [0, \infty)\) a distance
function. The pair~\((D, d)\) is a pseudometric space if~\(d\) satisfies all the
metric axioms except the identity of indiscernibles. That is,~\(d(x, y) = 0\)
does not necessarily imply that~\(x\) and~\(y\) are identical. In the
literature, these functions are also termed
semi-metrics~\citep[e.g.,][]{wells1975embeddings, burago2001course}. We denote
open balls of radius~\(r\) by~\(\B_{r}(x) = \{ u \in D : d(x, u) < r \}\). We
follow the convention of defining the distance between a point and a set,
say~\(x \in D\) and~\(A \subset D\), as the infimum distance between them. Formally,
\(d(x, A) = \inf_{a \in A} d(x, a)\). Finally, the \(r\)-neighborhood of a
set~\(A \subset D\) is the union of open balls centered at the elements of that
set:~\(\B_{r}(A) = \bigcup _{a \in A} \B_{r}(a)\).

From here on, we focus on length spaces, where the distance is defined as the
infimum of the lengths of admissible paths~(i.e., continuous rectifiable curves)
connecting two points~\citep[Ch.~2]{burago2001course}. This assumption is
imposed to ensure the existence of
midpoints~\citep[Lemma~2.4.8]{burago2001course}, a property that enables the
derivation of the linear expansion of balls.
\begin{lemma}[Linear expansion of balls]\label{lm:ballball}
  Let~\((D, d)\) be a length space. For any~\(x \in D\) and radii \(r, k \geq 0\),
  \[
    \B_{r}\left( \B_{k}(x) \right) = \B_{r + k}(x).
  \]
\end{lemma}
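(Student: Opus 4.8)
The plan is to prove the set equality by establishing the two inclusions separately, since only one of them actually invokes the length-space hypothesis. For the inclusion $\B_{r}(\B_{k}(x)) \subseteq \B_{r+k}(x)$, I would take any $u \in \B_{r}(\B_{k}(x))$, which by the definition of the $r$-neighborhood means there is some $a \in \B_{k}(x)$ with $d(a, u) < r$. Since $d(x, a) < k$, the triangle inequality gives $d(x, u) \leq d(x, a) + d(a, u) < k + r$, so $u \in \B_{r+k}(x)$. This direction uses nothing beyond the pseudometric axioms and holds in any $(D, d)$; I would record it first to isolate where the extra structure is genuinely needed.

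The substantive direction is the reverse inclusion $\B_{r+k}(x) \subseteq \B_{r}(\B_{k}(x))$, where I expect the main obstacle to lie. Given $u \in \B_{r+k}(x)$, write $\delta = d(x, u) < r + k$; the goal is to produce an intermediate point $a$ with $d(x, a) < k$ and $d(a, u) < r$, for then $a \in \B_{k}(x)$ and $u \in \B_{r}(a) \subseteq \B_{r}(\B_{k}(x))$. First I would split the budget: because $\delta < r + k$, one can choose $\alpha, \beta \geq 0$ with $\alpha + \beta = \delta$, $\alpha < k$, and $\beta < r$ — for instance $\alpha = \delta k / (r + k)$ and $\beta = \delta r / (r + k)$. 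The remaining task is to realize a point $a$ whose distances to $x$ and to $u$ are approximately $\alpha$ and $\beta$. This is exactly what the length-space structure supplies: via the midpoint property of Lemma~2.4.8 in \citet{burago2001course}, iterated to obtain intermediate points along a near-geodesic, or equivalently by taking a rectifiable path from $x$ to $u$ of length $L$ with $\delta \leq L < r + k$, parametrizing it by arc length, and selecting $a$ at an arc-length position $s$ satisfying $L - r < s < k$ (a nonempty range since $L < r + k$). The bounds $d(x, a) \leq s < k$ and $d(a, u) \leq L - s < r$, where each distance is dominated by the length of the corresponding subarc, then close the argument.

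The point worth emphasizing is that this second inclusion can fail in a general metric space: if no point lies suitably between $x$ and $u$, then $\B_{r}(\B_{k}(x))$ may be strictly smaller than $\B_{r+k}(x)$. Thus the length-space assumption is not cosmetic but is precisely what guarantees the existence of the required intermediate point $a$, and this is the step I would treat most carefully. I would take $r, k > 0$ as the governing case, the boundary radii being immediate under the convention fixing radius-zero balls; I would also note that the arc-length bookkeeping uses only an approximately length-minimizing path, so no completeness of $D$ is needed.
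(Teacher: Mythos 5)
Your proof is correct, and it shares the paper's broad outline: the inclusion \(\B_{r}(\B_{k}(x)) \subseteq \B_{r+k}(x)\) via the triangle inequality (valid in any pseudometric space), followed by a path-based construction of an intermediate point for the reverse inclusion. However, your handling of the reverse inclusion is genuinely different, and in fact more careful than the paper's. The paper asserts that ``in length spaces shortest paths always exist'' (citing Lemma~2.4.8 of Burago et al.) and places the intermediate point \(v\) on a shortest path at distance exactly \(k\) from \(x\). Both steps are shaky: that lemma guarantees only \(\varepsilon\)-midpoints, and length spaces need not be geodesic --- the paper's own Appendix~B concedes that the infimum of path lengths need not be attained --- so the existence of a shortest path requires extra hypotheses (e.g., completeness plus local compactness, via Hopf--Rinow). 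Moreover, even granting a shortest path, the point with \(d(x,v) = k\) lies on the \emph{boundary} of the open ball, so the paper's conclusion \(v \in \B_{k}(x)\) does not follow as written. Your argument avoids both defects: you invoke only the defining property of a length space --- a rectifiable path of length \(L < r + k\) exists whenever \(d(x,u) < r + k\) --- and you select the intermediate point at an arc-length position in the open interval \((L - r, k)\), which delivers the strict inequalities \(d(x,a) < k\) and \(d(a,u) < r\) that open balls demand. What this buys is a proof valid in exactly the generality the lemma claims (arbitrary length spaces, no completeness), at the modest cost of arc-length bookkeeping. Your two side remarks are also well taken: the reverse inclusion genuinely fails in general metric spaces (e.g., a two-point space), and the degenerate radii \(r = 0\) or \(k = 0\) are problematic under the paper's open-ball definition (where \(\B_{0}(\cdot) = \emptyset\)) unless a convention for radius-zero balls is fixed --- a point the paper passes over silently.
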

\begin{proof}
  Let~\(u \in \B_{r}(\B_{k}(x))\). Then~\(d(u,v) < r\) for some~\(v \in \B_{k}(x)\),
  so~\(d(x,v) < k\). By the triangle inequality,
  \[
    d(x, u) \le d(x,v) + d(v,u) < k + r,
  \]
  thus, \(u \in \B_{r + k}(x)\).

  Conversely, let~\(u \in \B_{r + k}(x)\). Note that~\((D, d)\) is a length
  space~(by assumption). In length spaces shortest paths always
  exist~\citep[Lemma~2.4.8]{burago2001course}. Thus, let~\(\gamma\) be a shortest
  path from~\(x\) to~\(u\). Let~\(v\) be the point on~\(\gamma\) at
  distance~\(k\) from~\(x\). Then~\(d(x, v) = k\)
  and~\(d(u, v) = d(x, u) - k < r\). Therefore, \(v \in \B_{k}(x)\)
  and~\(u \in \B_{r}(v)\), proving~\(\B_{r + k}(x) \subseteq \B_{r}(\B_{k}(x))\).
\end{proof}
This result serves as a cornerstone for our main findings in
Section~\ref{sec:main}. For additional background on length spaces, we refer to
Appendix~\ref{app:length}.


\subsection{Conditionally negative definite functions and embeddings}

Conditionally negative definite distance functions are closely related to PD
isotropic covariance functions. A
function~\( g \, : \, D \times D \to \mathbb{R}_{+} \) is CND, if it satisfies
\[
  \sum_{i = 1}^{m} \sum_{j = 1}^{m} b_i b_j g(s_i, s_j) \leq 0,
\]
for~\(s_1, \ldots, s_m \in D\) and any real numbers~\(b_1, \ldots, b_m\) such that
\(\sum_i b_i = 0\). In contrast, a real function~\(\varphi\) defined on non-negative real
numbers is radial positive definite~(RPD) on the pseudometric space~\((D, d)\)
provided that~\(\varphi\) is continuous
and
\[
  \sum_{i = 1}^{m} \sum_{j = 1}^{m} a_i a_j \varphi(d(s_i, s_j)) \geq 0,
\] for real numbers~\(a_1, \ldots, a_m \) and~\(s_1, \ldots, s_m \in D\). Note that
isotropic covariance functions represent a special case of RPD functions.

Denote~\(\Phi_p\) as the class of RPD functions on~\(\mathbb{R}^p\). Functions belonging to
this class satisfy~\citep{schoenberg1938metric,
  gneiting2013strictly}:
\[
  \Phi_1 \supset \Phi_2 \supset \cdots \supset \Phi_{\infty} = \bigcap_{k = 1}^{\infty} \Phi_{k}.
\]
Well known members of the class~\(\Phi_{\infty}\) include the Mat\'{e}rn, powered
exponential~(PEXP), generalized Cauchy, and Dagum families~\citep[Table
2]{gneiting2013strictly}. For further properties and a detailed definition, we
refer to~\citet{gneiting2013strictly}.

Isometric embeddings, a special class of mappings between spaces, are crucial to
establish our results. An isometric embedding preserves the distances between
points in the original space when mapped to the target space. More formally, a
space~\((D, d)\) can be isometrically embedded on another space
\((D^\ast, d^\ast)\) if there exists a map~\( \phi \,: \, D \to D^\ast\) such
that
\[
  d^\ast(\phi(s), \phi(t)) = d(s, t),
\]
for any \(s, t \in D\)~\citep{wells1975embeddings}.

The following theorem, originally due to \citet{schoenberg1938metric}, and later
summarized by \citet{wells1975embeddings}, provides a fundamental link between
conditional negative definiteness and Hilbert space embeddings. This result
will play a central role in our construction of valid covariance functions on
spaces indexed by sets.

\begin{theorem}\label{th:main}
  A pseudometric space~\((D, d)\) can be isometrically embedded on a Hilbert
  space if and only if~\(d^2\) is~CND.\@
\end{theorem}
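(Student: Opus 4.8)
The plan is to prove the two implications separately, noting that the forward direction is routine while the reverse direction (CND $\Rightarrow$ embedding) carries essentially all the content of the classical result.

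For the forward direction, suppose there is an isometric embedding $\phi : D \to H$ into a Hilbert space, so that $d(s,t) = \|\phi(s) - \phi(t)\|_H$ and hence $d^2(s,t) = \|\phi(s) - \phi(t)\|^2$. Given points $s_1, \ldots, s_m$ and reals $b_1, \ldots, b_m$ with $\sum_i b_i = 0$, I would expand the squared norm as $\|\phi(s_i)\|^2 - 2\langle \phi(s_i), \phi(s_j) \rangle + \|\phi(s_j)\|^2$ and substitute into $\sum_{i,j} b_i b_j d^2(s_i,s_j)$. The constraint $\sum_i b_i = 0$ annihilates both single-index terms, leaving
\[
  \sum_{i=1}^{m}\sum_{j=1}^{m} b_i b_j\, d^2(s_i, s_j) = -2 \Bigl\| \sum_{i=1}^{m} b_i \phi(s_i) \Bigr\|^2 \leq 0,
\]
which is exactly the CND inequality~\eqref{eq:ncd}.

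For the reverse direction, the strategy is to manufacture a Hilbert space directly from $d^2$. I would fix a base point $s_0 \in D$ and define the candidate inner-product kernel
\[
  k(s, t) = \tfrac{1}{2}\left[ d^2(s, s_0) + d^2(t, s_0) - d^2(s, t) \right].
\]
The crux is to show that $k$ is PD, and this is the main obstacle. Given reals $a_1, \ldots, a_m$ at points $s_1, \ldots, s_m$, I would augment the configuration by the base point with weight $b_0 = -\sum_{i=1}^{m} a_i$ and $b_i = a_i$ otherwise, so that $\sum_{i=0}^{m} b_i = 0$ and the CND hypothesis applies to $\{s_0, s_1, \ldots, s_m\}$. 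Expanding $\sum_{i,j=0}^{m} b_i b_j\, d^2(s_i, s_j) \leq 0$, using $d(s_0, s_0) = 0$, and collecting terms yields precisely $\sum_{i,j} a_i a_j\, k(s_i, s_j) \geq 0$. This augmentation argument is the only nontrivial algebraic step.

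Once $k$ is known to be PD, I would invoke the standard Gram/reproducing-kernel construction: form the vector space of finitely supported formal combinations $\sum_i a_i \delta_{s_i}$, equip it with the positive semi-definite bilinear form induced by $k$, quotient by its null space, and complete to obtain a Hilbert space $H$ with feature map $\phi(s) = [\delta_s]$ satisfying $\langle \phi(s), \phi(t) \rangle = k(s,t)$. A direct computation gives $k(s,s) = d^2(s, s_0)$ (again using $d(s,s) = 0$), so that
\[
  \|\phi(s) - \phi(t)\|^2 = k(s,s) - 2 k(s,t) + k(t,t) = d^2(s, t),
\]
which establishes the isometry. Because $d$ is only a pseudometric, $\phi$ need not be injective: points at $d$-distance zero map to the same element of $H$, which is fully consistent with the definition of isometric embedding and so requires no separate treatment.
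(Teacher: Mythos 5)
Your proof is correct and complete: the forward direction's expansion of $\lVert \phi(s_i) - \phi(s_j) \rVert^2$ under the constraint $\sum_i b_i = 0$, and the converse's base-point kernel $k(s,t) = \tfrac{1}{2}\left[ d^2(s,s_0) + d^2(t,s_0) - d^2(s,t) \right]$ (shown PD via the augmentation $b_0 = -\sum_i a_i$) followed by the quotient-and-complete Gram construction, are exactly right, and your remark that non-injectivity of $\phi$ is compatible with the paper's definition of isometric embedding for pseudometric spaces is also correct. The paper itself gives no proof of this theorem---it defers to Schoenberg (1938) and Wells and Williams (1975, Theorem~2.4 and Remark~3.2)---and your argument is precisely the classical proof contained in those references, so there is nothing to reconcile.
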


The central implication of this result for our purposes is that conditional
negative definiteness directly enables the construction of valid isotropic
covariance functions via Hilbert space embeddings. Proofs and further discussion
of this classical result can be found in~\citet[Theorem~2.4 and
Remark~3.2]{wells1975embeddings} with additional review and commentary in
\citet{menegatto2020gneiting}. The following corollary formalizes this
connection for a widely used class of covariance models.

\begin{corollary}\label{co:main}
  Let \((D, d)\) be a pseudometric space such that~\(d\) is CND.\@ Then,
  \(\exp\{ - \kappa d^{\nu} \} \) is RPD for \(\nu \in (0, 1]\) and \(\kappa > 0\).
\end{corollary}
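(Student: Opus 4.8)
The plan is to reduce the claim to two ingredients and then combine them. First, note that $\varphi(h)=\exp\{-\kappa h^{\nu}\}$ is continuous on $[0,\infty)$, so only the positive-definiteness inequality \eqref{eq:rpd} needs to be checked. The two ingredients are: (i) conditional negative definiteness of a non-negative symmetric function vanishing on the diagonal forces $\exp\{-t\,\cdot\}$ to be positive definite for every $t>0$; and (ii) raising a non-negative CND function to a power $\nu\in(0,1]$ preserves the CND property. Granting both, the corollary is immediate: $d$ is a CND pseudometric, so by (ii) the function $d^{\nu}$ is CND, and applying (i) to $d^{\nu}$ with $t=\kappa$ shows $\exp\{-\kappa\,d^{\nu}\}$ is positive definite, which is exactly \eqref{eq:rpd}. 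The boundary case $\nu=1$ follows from (i) applied directly to $d$.

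For ingredient (i) I would invoke the classical Schoenberg equivalence underlying Theorem~\ref{th:main}. Since $d$ is CND, the Schoenberg construction yields a map $\phi:D\to H$ into a Hilbert space with $\|\phi(s)-\phi(t)\|_{H}^{2}=d(s,t)$. Then $\exp\{-t\,d(s,t)\}=\exp\{-t\,\|\phi(s)-\phi(t)\|_{H}^{2}\}$ is the Gaussian kernel on $H$, whose positive definiteness is standard: expanding $\|\phi(s)-\phi(t)\|_{H}^{2}=\|\phi(s)\|^{2}+\|\phi(t)\|^{2}-2\langle\phi(s),\phi(t)\rangle$, the cross term produces $\exp\{2t\langle\phi(s),\phi(t)\rangle\}$, which is positive definite as a convergent power series in the positive definite inner-product kernel, while the remaining factors are positive and act diagonally. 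The same characterization applies verbatim to $d^{\nu}$, since it too is non-negative, symmetric, and zero on the diagonal.

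Ingredient (ii) is where the real work lies, and I expect it to be the main obstacle. The key tool is the subordination (Lévy--Khintchine-type) representation $x^{\nu}=c_{\nu}\int_{0}^{\infty}(1-e^{-tx})\,t^{-1-\nu}\,dt$, valid for $x\ge 0$ and $\nu\in(0,1)$ with $c_{\nu}=\nu/\Gamma(1-\nu)>0$. Fixing $s_{1},\dots,s_{m}\in D$ and real $b_{1},\dots,b_{m}$ with $\sum_{i}b_{i}=0$, I would substitute this representation and use $\sum_{i,j}b_{i}b_{j}=(\sum_{i}b_{i})^{2}=0$ to annihilate the constant term, obtaining $\sum_{i,j}b_{i}b_{j}\,d(s_{i},s_{j})^{\nu}=-c_{\nu}\int_{0}^{\infty}t^{-1-\nu}\big(\sum_{i,j}b_{i}b_{j}\,e^{-t\,d(s_{i},s_{j})}\big)\,dt$. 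By ingredient (i) the inner double sum is non-negative for every $t>0$, so the integrand is non-positive and the whole expression is $\le 0$, establishing that $d^{\nu}$ is CND. The delicate points are the interchange of the finite sum with the integral and the integrability of $t^{-1-\nu}(1-e^{-tx})$ at both endpoints: near $0$ the factor $1-e^{-tx}=O(t)$ tames the singularity precisely because $\nu<1$, and near $\infty$ the crude bound $1-e^{-tx}\le 1$ gives decay $t^{-1-\nu}$ because $\nu>0$, both holding uniformly over the finitely many pairs $(s_{i},s_{j})$.
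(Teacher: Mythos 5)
Your proof is correct, but it takes a genuinely different route from the paper's. Both arguments pivot on Schoenberg's embedding theorem (Theorem~\ref{th:main}): the paper embeds \((D, d^{1/2})\) isometrically into a Hilbert space, so that \(d(s,t) = \lVert \phi(s) - \phi(t) \rVert^2\), and then invokes Corollary~4.8 of \citet{wells1975embeddings} as a black box, namely that \(\exp\{-\kappa \lVert \cdot \rVert_2^{2\nu}\}\) is RPD on \(L_2\) for \(\nu \in (0,1]\), which immediately yields the claim. You instead unpack that citation: the subordination identity \(x^{\nu} = c_{\nu}\int_0^{\infty}(1 - e^{-tx})\,t^{-1-\nu}\,dt\), combined with the CND-to-PD exponential correspondence (your ingredient (i), proved via the embedding and the standard Gaussian-kernel expansion), shows that \(d^{\nu}\) is itself CND, and a second application of ingredient (i), now to \(d^{\nu}\) with \(t = \kappa\), finishes the proof. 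Your route is longer but essentially self-contained, re-deriving rather than citing the Hilbert-space result, and it establishes a stronger intermediate fact: \(d^{\nu}\) is CND for \(\nu \in (0,1]\), which is precisely the stability property of \citet{meckes2013positive} (Theorem~3.6) that the paper quotes without proof in Section~\ref{sec:notation}. The paper's proof, by contrast, is three lines at the cost of leaning on the cited corollary (whose classical proof is, in effect, the argument you wrote out). One point you should make explicit: when you apply ingredient (i) to \(d^{\nu}\), Theorem~\ref{th:main} as stated in the paper requires \((D, d^{\nu/2})\) to be a pseudometric space; this does hold, since \(x \mapsto x^{\alpha}\) is increasing and subadditive for \(\alpha \in (0,1]\), so the triangle inequality survives the power \(\nu/2\), but your claim that the characterization "applies verbatim" deserves this one-line justification (alternatively, invoke the form of Schoenberg's theorem for symmetric kernels vanishing on the diagonal, which needs no triangle inequality as a hypothesis).
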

\begin{proof}
  If \(d\) is CND, Theorem~\ref{th:main} ensures that \((D, d^{1/2})\) can be
  isometrically embedded in a Hilbert space \(H\). Consequently, there exists a
  map \(\phi: D \to \mathbb{R}^p\) such that
  \(d(s, t) = \lVert \phi(s) - \phi(t) \rVert_2\) for all \(s, t \in D\).  By
  Corollary~4.8 of~\citet{wells1975embeddings}, the function
  \( \exp \{ - \kappa \lVert \cdot \rVert_2^{2\nu} \} \) is RPD on \(L_2\) for
  \( \nu \in (0, 1] \). Therefore, for any set of points
  \(s_1, \dots, s_m \in D\) and real numbers \(a_1, \dots, a_m\), we have
  \begin{align*}
    \sum_{i = 1}^m \sum_{j = 1}^m a_i a_j
    \exp \{ - \kappa {d(s_i, s_j)}^{\nu} \}
    &= \sum_{i = 1}^m \sum_{j = 1}^m a_i a_j \exp \{ - \kappa
    \lVert \phi(s_i) - \phi(s_j) \rVert_2^{2\nu} \} \ge 0.
  \end{align*}
  This establishes that \(\exp\{ - \kappa d^{\nu} \} \) is RPD on \((D, d)\),
  for~\(\nu \in (0, 1]\).
\end{proof}

Corollary~\ref{co:main} establishes that the PEXP function with power
\(\nu \in (0, 1]\) is always positive definite on metric spaces equipped with CND
metrics~(on the original scale).

We now generalize the embedding-based argument to show that broad families of
isotropic covariance functions remain valid on non-Euclidean spaces equipped
with conditionally negative definite metrics. Specifically, Theorem~\ref{th:2}
extends Theorem~\ref{th:main} to the class of radial positive definite functions
and provides sufficient conditions for constructing valid covariance models
beyond Euclidean domains.

\begin{theorem}
  \label{th:2}
  Let~\(D\) be a non-Euclidean space and~\(d\) be a CND pseudometric. Then, any
  function belonging to the class~\(\Phi_{\infty}\) is RPD on~\((D, d^{1/2})\).
\end{theorem}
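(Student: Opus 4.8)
The plan is to use the Hilbert space embedding guaranteed by Theorem~\ref{th:main} to transport any finite configuration of points in~\(D\) into a Euclidean space, where membership in~\(\Phi_\infty\) directly delivers the required positivity. First I would set~\(\rho = d^{1/2}\) and observe that~\(\rho^2 = d\) is CND by hypothesis; Theorem~\ref{th:main} then furnishes a Hilbert space~\(\mathcal{H}\) and an isometric embedding~\(\phi \colon D \to \mathcal{H}\) satisfying~\(\lVert \phi(s) - \phi(t) \rVert_{\mathcal{H}} = d^{1/2}(s,t)\) for all~\(s,t \in D\).

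The key reduction is to exploit that verifying~\eqref{eq:rpd} only involves finitely many points at a time. Fixing~\(s_1, \ldots, s_m \in D\), the images~\(\phi(s_1), \ldots, \phi(s_m)\) span an affine subspace of~\(\mathcal{H}\) of dimension~\(k \leq m - 1\). Any such finite-dimensional inner-product subspace is linearly isometric to~\(\mathbb{R}^k\), so after translating by~\(\phi(s_1)\) and applying this isometry I obtain points~\(x_1, \ldots, x_m \in \mathbb{R}^k\) with~\(\lVert x_i - x_j \rVert = \lVert \phi(s_i) - \phi(s_j) \rVert_{\mathcal{H}} = d^{1/2}(s_i, s_j)\) for every~\(i, j\).

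With the configuration realized in~\(\mathbb{R}^k\), I would invoke the nesting~\(\Phi_\infty \subset \Phi_k\) recorded in Section~\ref{sec:notation}: since~\(\varphi \in \Phi_\infty\), it is in particular RPD on~\(\mathbb{R}^k\), whence
\[
  \sum_{i=1}^m \sum_{j=1}^m a_i a_j \varphi\bigl(d^{1/2}(s_i, s_j)\bigr)
  = \sum_{i=1}^m \sum_{j=1}^m a_i a_j \varphi\bigl(\lVert x_i - x_j \rVert\bigr) \geq 0
\]
for all real~\(a_1, \ldots, a_m\). Continuity of~\(\varphi\) is part of its membership in~\(\Phi_\infty\), so both requirements in the definition of RPD on~\((D, d^{1/2})\) are met.

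The argument is short, and the only genuinely delicate point is the passage from the possibly infinite-dimensional embedding space~\(\mathcal{H}\) to a finite-dimensional Euclidean space. The main obstacle I would guard against is asserting positive definiteness of~\(\Phi_\infty\) functions on~\(\mathcal{H}\) directly; instead, I reduce to the finite-dimensional subspace spanned by the~\(m\) embedded points, where the inclusion~\(\Phi_\infty \subset \Phi_k\) applies verbatim and no representation theorem for Hilbert-space-valued positive definite functions is required.
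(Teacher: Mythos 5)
Your proof is correct and takes essentially the same route as the paper's: both invoke Theorem~\ref{th:main} to embed \((D, d^{1/2})\) isometrically into a Hilbert space and then transfer positive definiteness of \(\Phi_{\infty}\) members through that embedding. The only difference is that you make explicit the finite-dimensional reduction (passing to the affine span of the \(m\) embedded points and applying \(\Phi_{\infty} \subset \Phi_k\)), a step the paper's two-sentence proof leaves implicit.
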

\begin{proof}
  From Theorem~\ref{th:main}, we have that \((D, d^{1/2})\) is a metric space
  that can be isometrically embedded in~\(H\). Therefore, functions in the
  class~\(\Phi_{\infty}\) are also RPD on~\((D, d^{1/2})\).
\end{proof}

This result shows that covariance families in~\(\Phi_{\infty}\), originally defined for
Hilbert spaces, remain valid on any domain equipped with a conditionally
negative definite pseudometric after a square-root transformation of the
distance. This implication allows covariance constructions developed in
Euclidean settings to be transferred directly to a wide range of non-Euclidean
domains encountered in spatial statistics, including those based on
\(\alpha\)-norms~\citep{richards1985positive}, great-circle
distance~\citep{meckes2013positive} and graph-indexed spaces
\citep{anderes2020isotropic}.


\section{Isotropic positive definite functions for sets}\label{sec:main}

We address the general problem of constructing valid isotropic dependence
structures on collections of spatial sets, rather than on point-indexed domains.
As a natural starting point, one may consider the classical Hausdorff
distance, which provides a natural notion of discrepancy between spatial sets
but lacks general guarantees for valid covariance construction. For
non-empty sets~\(A_1, A_2 \subset D\), the Hausdorff distance is defined as
\begin{equation}\label{eq:hausdist}
  d_{h}(A_1, A_2) = \inf \{ r \geq 0 \, : \, A_1 \subseteq \B_{r}(A_2), A_2 \subseteq \B_{r}(A_1) \}.
\end{equation}
If no such~\(r\) exists, set~\(d_{h}(A_1, A_2) := \infty\). While~\(d_{h}\)
constitutes a valid metric on the class of non-empty, closed, and bounded
subsets of~\(D\)~\citep{arutyunov2016some}, it presents significant challenges
for statistical modeling. Despite its intuitive interpretation in spatial
statistics~\citep{godoy2022testing, godoy2026statistical}, computing the
Hausdorff distance is often prohibitively
expensive~\citep{knauer2011directed} and, to the best of our
knowledge, constructing valid positive-definite functions based directly on
\(d_{h}\) remains an open problem. These analytical and computational
limitations motivate the ball-Hausdorff distance defined below.

We define the ball–Hausdorff distance by reducing each spatial set to its
minimal enclosing ball, thereby retaining only location and scale information
relevant for dependence modeling. Let~\((D, d)\) be a pseudometric space and
let~\(\mathcal{C}_D\) denote the class of bounded subsets of~\(D\). For
any~\(A \in \mathcal{C}_D\) and~\(x \in D\), we
define
\[
  {\rm R}(x, A) = \inf \{ r \geq 0 \, : \, A \subseteq \B_{r}(x) \}
\]
as the radius of the smallest ball centered at~\(x\) containing~\(A\).

To construct a unique minimum enclosing ball, we must rely on a notion of
centrality. We adopt the Chebyshev center~\citep[p.~128]{garkavi1970theory}
because it directly minimizes the covering radius, ensuring the resulting
summary is the smallest possible ball containing~\(A\).
\begin{definition}[Chebyshev center]\label{def:cc}
  The \emph{Chebyshev radius} of any~\(A \in \mathcal{C}_D\), denoted
  by~\({\rm R}(A)\), is the infimum of the radii of all balls in~\(D\)
  containing~\(A\):
  \[
    {\rm R}(A) = \inf_{x \in D} {\rm R}(x, A).
  \]
  The set of \emph{Chebyshev centers} of~\(A\), denoted by~\(\mathcal{E}(A)\), consists of
  the points in~\(D\) that achieve this minimum radius:
  \[
    \mathcal{E}(A) = \{ x \in D : {\rm R}(x, A) = {\rm R}(A) \}.
  \]
  If~\(\mathcal{E}(A)\) is non-empty, any element~\(c(A) \in \mathcal{E}(A)\) is referred
  to as a Chebyshev center of~\(A\).
\end{definition}

We must address the existence and uniqueness of the Chebyshev center to ensure
the metric is well-defined. Regarding existence, while the set of
centers~\(\mathcal{E}(A)\) may be empty in arbitrary metric spaces, the
Hopf–Rinow–Cohn-Vossen Theorem~\citep[Theorem~2.5.28]{burago2001course}
guarantees that in complete manifolds~(such as the sphere and torus), closed and
bounded sets are compact. Consequently, the radius function attains its minimum,
ensuring~\(\mathcal{E}(A)\) is non-empty. Regarding uniqueness, the strict convexity of
the norm in Euclidean spaces guarantees that~\(\mathcal{E}(A)\) is a
singleton~\citep{amir1978chebyshev}. However, uniqueness is not guaranteed
globally in general length spaces. For example, a great circle
on~\(\mathbb{S}^2\) admits multiple centers~(the poles). Therefore, to ensure
validity in non-Euclidean settings, we impose further restrictions
on~\(\mathcal{C}_D\). For the unit sphere, this requires a Chebyshev radius less
than~\(\pi/2\)~\citep[Proposition~1.4]{bridson1999metric}. This condition is
satisfied by all spatial configurations in our numerical experiments.

Using Definition~\ref{def:cc}, we define the geometric
summary~\(\mathcal{B}(A)\) as the minimum enclosing ball of~\(A\), characterized by the
center~\(c(A)\) and radius~\(R(A)\). Note that, if~\(A\) is already a ball~(or a
singleton), it follows trivially that~\(A = \mathcal{B}(A)\).

Now we define the ball-Hausdorff distance between two bounded sets.

\begin{definition}\label{def:bh}
  The ball-Hausdorff distance between two sets~\(A_1\) and~\(A_2\) is defined as
  \[
    bh(A_1, A_2) = d_{h}(\mathcal{B}(A_1), \mathcal{B}(A_2)),
  \]
  where~\(d_{h}(\cdot, \cdot)\) denotes the Hausdorff distance~(see
  Equation~\eqref{eq:hausdist}) and~\(\mathcal{B}(\cdot)\) denotes a minimum enclosing ball.\@
\end{definition}

The ball–Hausdorff distance inherits basic pseudometric properties from the
Hausdorff distance while remaining insensitive to within-set geometry. In
particular, non-negativeness follows immediately from the definition, while the
triangle inequality is a consequence of the fact that Hausdorff distance is a
metric on~\(\mathcal{C}_D\)~\citep[Appendix C]{molchanov2005theory}. Note that the
distance does not render a metric because~\(bh(A_1, A_2) = 0\) does not
imply~\(A_1\) and~\(A_2\) are identical. Importantly, the ball-Hausdorff
distance reduces to the underlying metric~\(d\) when applied to singletons. In
other words, the proposed distance function between points reduces to~\(d\),
from the original metric space. In Appendix~\ref{app:upper}, we also derive an
upper-bound for the ball-Hausdorff distance.

An important question in practice is how to compute the ball-Hausdorff
distance. At first sight, it involves computationally expensive operations. For
instance, calculating the Hausdorff distance between two balls may not always be
trivial~(see Section~\ref{sec:notation} for the problems in calculating the
Hausdorff distance). In the special case where the study region lies
on~\(\mathbb{R}^2\), a closed-form formula for the Hausdorff distance between circles has
been developed~\citep{scitovski2015multiple}. We extend this formula to length
spaces, which are general enough to cover most scenarios relevant for spatial
statistics applications.
\begin{theorem}\label{th:bh_ls}
  Let~\((D, d)\) be a length-space. Define~\(\mathcal{C}_D\) as the class of non-empty,
  bounded sets in~\(D\). For any~\(A_1, A_2 \in \mathcal{C}_D\), the ball-Hausdorff distance
  is:
  \[
    bh(A_1, A_2) = d(\rm{c}(A_1), \rm{c}(A_2)) + \lvert {\rm R}(A_1) -
    {\rm R}(A_2) \rvert.
  \]
\end{theorem}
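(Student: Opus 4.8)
The plan is to unfold the definition and reduce everything to computing the Hausdorff distance between two closed balls, then evaluate that distance through the neighborhood characterization in Equation~\eqref{eq:hausdist} together with the ball–expansion Lemma~\ref{lm:ballball}. Throughout I write $c_i := {\rm c}(A_i)$, $R_i := {\rm R}(A_i)$, and $\delta := d(c_1, c_2)$, so that by Definition~\ref{def:bh} the minimum enclosing balls are $\mathcal{B}(A_i) = \bar\B_{R_i}(c_i)$ and $bh(A_1,A_2) = d_{h}(\bar\B_{R_1}(c_1), \bar\B_{R_2}(c_2))$. It therefore suffices to show that the Hausdorff distance between two closed balls equals $\delta + \lvert R_1 - R_2\rvert$.

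First I would record a closed-ball version of Lemma~\ref{lm:ballball}: in a length space, $\B_r(\bar\B_{R}(c)) = \B_{r+R}(c)$ for every $r > 0$. The inclusion $\subseteq$ is immediate from the triangle inequality; the reverse inclusion uses the length-space hypothesis, since any $u$ with $d(c,u) < r + R$ is joined to $c$ by a path of length arbitrarily close to $d(c,u)$, and selecting the point on this path at arc length $R$ from $c$ produces a witness in $\bar\B_{R}(c)$ within distance $r$ of $u$. With this identity, the defining inclusion $\bar\B_{R_1}(c_1) \subseteq \B_r(\bar\B_{R_2}(c_2))$ reduces to $\bar\B_{R_1}(c_1) \subseteq \B_{r + R_2}(c_2)$.

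The crux is then to determine, for this directed part, the extremal quantity $M_1 := \sup_{a \in \bar\B_{R_1}(c_1)} d(c_2, a)$, because $\bar\B_{R_1}(c_1) \subseteq \B_{r+R_2}(c_2)$ holds for the infimal $r$ exactly when $r \geq M_1 - R_2$. The triangle inequality gives the easy bound $M_1 \leq R_1 + \delta$. The main obstacle is the matching lower bound: I must exhibit points of $\bar\B_{R_1}(c_1)$ whose distance to $c_2$ approaches $R_1 + \delta$, that is, points lying on the far side of the ball along the geodesic direction from $c_2$ through $c_1$. This is precisely where the length-space structure is indispensable: taking a near-length-minimizing path from $c_2$ to $c_1$ and extending it past $c_1$ by arc length $R_1$ (equivalently, iterating the midpoint property of~\citealp[Lemma~2.4.8]{burago2001course}) yields, for every $\epsilon > 0$, a point $a$ with $d(c_1, a) \leq R_1$ and $d(c_2, a) > R_1 + \delta - \epsilon$. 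In the complete settings of interest, namely Euclidean space and the sphere under the radius restriction following Definition~\ref{def:cc}, the Hopf–Rinow–Cohn-Vossen theorem supplies an actual minimizing geodesic, so the supremum is attained and $M_1 = R_1 + \delta$ exactly. Hence the infimal radius satisfying the first inclusion is $\max\{0,\, R_1 + \delta - R_2\}$.

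Finally I would invoke symmetry: the reverse inclusion $\bar\B_{R_2}(c_2) \subseteq \B_{r + R_1}(c_1)$ yields infimal radius $\max\{0,\, R_2 + \delta - R_1\}$ by the identical argument with the roles of the two balls exchanged. Since the Hausdorff distance is the smallest $r$ making both inclusions hold, it equals
\[
  \max\{\, R_1 + \delta - R_2,\; R_2 + \delta - R_1 \,\} = \delta + \lvert R_1 - R_2\rvert,
\]
both quantities being nonnegative because $\delta \geq 0$. A minor bookkeeping point, which does not affect the value, is that whether each supremum is attained only changes whether the governing inequality is strict; taking the infimum over admissible $r$ washes this out and leaves the closed-form expression above, completing the proof.
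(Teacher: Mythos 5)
Your strategy is essentially the same as the paper's: reduce \(bh\) to the Hausdorff distance between the two enclosing balls, use the expansion Lemma~\ref{lm:ballball} to rewrite the neighborhood inclusions in Equation~\eqref{eq:hausdist}, determine the infimal admissible radius for each directed inclusion, and finish with the max formula. The one place you go beyond the paper is in recognizing that the \emph{necessity} half of the computation requires the matching lower bound \(M_1 \ge \delta + R_1\) (in your notation): the paper's proof only establishes \(d(u,y) < r + d(x,y)\) for \(u \in \B_r(x)\), which makes \(m \ge \delta + R_1 - R_2\) \emph{sufficient} for the inclusion, and then silently treats it as equivalent, i.e.\ it assumes without proof that the supremum over the ball actually reaches \(\delta + R_1\). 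You correctly isolate this as the crux.

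Unfortunately, your justification of exactly that step is where the genuine gap lies. In a general length space you cannot ``extend a near-minimizing path past \(c_1\) by arc length \(R_1\)'': the midpoint property you invoke produces points \emph{between} two given endpoints, never beyond them, and the Hopf--Rinow--Cohn-Vossen theorem yields compactness and shortest paths between existing points, not geodesic extendability. Indeed the claim is false at the stated level of generality. Take \(D\) to be the closed unit disk in \(\mathbb{R}^2\) with its intrinsic metric (a geodesic, hence length, space), \(c_1 = (0.9, 0)\), \(R_1 = 0.5\), \(c_2 = (0,0)\), \(R_2 = 0.1\): every point of \(\bar\B_{R_1}(c_1)\) lies within distance \(1\) of \(c_2\), so \(M_1 = 1 < \delta + R_1 = 1.4\), and one computes \(d_h\bigl(\bar\B_{R_1}(c_1), \bar\B_{R_2}(c_2)\bigr) = 0.9\), whereas the claimed formula gives \(\delta + \lvert R_1 - R_2 \rvert = 1.3\). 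A similar failure occurs on \(\mathbb{S}^2\) whenever \(\delta + \lvert R_1 - R_2\rvert\) exceeds the diameter \(\pi\), even under the radius restriction mentioned after Definition~\ref{def:cc}. So the far-side lower bound requires an additional hypothesis (geodesic extendability through \(c_1\), together with a diameter or no-wrap-around condition), and no argument can derive it from the length-space axioms alone. In fairness, this defect belongs to the theorem and to the paper's own proof as much as to yours --- your write-up is the more transparent of the two about where the geometry is actually needed --- but as a standalone argument the central identity \(M_i = \delta + R_i\) remains unproved, and unprovable, in the generality claimed.
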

\begin{proof}
  For ease of notation, let~\(x = \rm{c}(A_1)\), \(y = \rm{c}(A_2)\),
  \(r = {\rm R}(A_1)\) and \(k = {\rm R}(A_2)\). Then,
  \[
    bh(A_1, A_2) = d_{h}(\B_{r}(x), \B_{k}(y)).
  \]
  By definition,
  \begin{align*}
    d_{h}(\B_{r}(x), \B_{k}(y))
    & =
      \inf \{ m \geq 0 \, : \, \B_{r}(x) \subseteq \B_{m}(\B_{k}(y)), \B_{k}(y) \subseteq
      \B_m(\B_{r}(x)) \} \\
    & =
      \inf \{ m \geq 0 \, : \, \B_{r}(x) \subseteq \B_{m + k}(y), \B_{k}(y) \subseteq
      \B_{m + r}(x) \},
  \end{align*}
  where the second line follows from Lemma~\ref{lm:ballball}.

  The inclusion~\(\B_{r}(x) \subseteq \B_{k + m}(y)\) holds if and only if every point
  at distance less than~\(r\) from~\(x\) lies within distance~\(k + m\)
  from~\(y\). Formally,
  \[
    \sup_{u \in \B_{r}(x)} d(u, y) \leq k + m.
  \]
  Note that, for any~\(u \in \B_{r}(x)\),
  \[
    d(u, y) \leq d(u, x) + d(x, y) < r + d(x, y).
  \]
  Thus,
  \[
    r + d(x, y) \leq k + m \iff m \geq d(x, y) + r - k.
  \]
  Similarly, the inclusion~\(\B_{r}(x) \subseteq \B_{k + m}(y)\) holds if and only if
  \[
    m \geq d(x, y) + k - r.
  \]
  Thus, the smallest~\(m \geq 0\) satisfying both conditions is:
  \[
    m \geq \max\{d(x, y) + r - k, d(x, y) + k - r\} = d(x, y) + \lvert r - k
    \rvert,
  \]
  by the~\(\max\) formula. Therefore,
  \begin{align*}
    bh(A_1, A_2) = d(\rm{c}(A_1), \rm{c}(A_2)) + \lvert \rm{R}(A_1) -
    \rm{R}(A_2) \rvert,
  \end{align*}
  as claimed.
\end{proof}

Theorem~\ref{th:bh_ls} offers both a practical computational method and a clear
geometric interpretation. Unlike the standard Hausdorff distance, which depends
on extremal point-to-point discrepancies, the ball–Hausdorff distance depends
only on relative location and scale through the sets' Chebyshev centers and
radii. In the limiting case where sets have identical radii or are small
relative to their separation, the distance reduces to the distance between their
centers. Despite their structural differences, the ball--Hausdorff and Hausdorff
distance may coincide. The corollary below presents one such instance.
\begin{corollary}\label{co:r1}
  In~\(\mathbb{R}^1\), \(bh(\cdot, \cdot) \equiv d_h(\cdot, \cdot)\).
\end{corollary}
\begin{proof}
  Let~\([a, b]\) and~\([c, d]\) represent two arbitrary intervals
  in~\(\mathbb{R}^1\). Then, we have:
  \begin{align*}
    bh([a, b], [c, d])
    & = \frac{1}{2} [ \lvert (a + b) - (c + d) \rvert + \lvert (b - a) - (d - c)
      \rvert ] \\
    & = \frac{1}{2} [ \lvert (b - d) + (a - c) \rvert + \lvert (b - d) - (a - c)
      \rvert ] \\
    & = \max (\lvert b - d \rvert, \lvert a - c \rvert).
  \end{align*}
  Similarly,
  \begin{align*}
    d_h([a, b], [c, d])
    & = \inf \{ r \geq 0 \, : \, [a, b] \subseteq B_r([c, d]), [c, d] \subseteq B_r([a, b]) \} \\
    & = \inf \{ r \geq 0 \, : \, [a, b] \subseteq [c - r, d + r], [c, d] \subseteq [a - r, b + r] \} \\
    & = \max (\lvert b - d \rvert, \lvert a - c \rvert).
  \end{align*}
  Therefore,
  \[
    bh([a, b], [c, d]) = d_h([a, b], [c, d]),
  \]
  as claimed.
\end{proof}

In addition to providing a simplified expression for the ball-Hausdorff
distance, Theorem~\ref{th:bh_ls} enables a direct characterization of its
conditional negative definiteness, which underpins our main result.

\begin{theorem}\label{th:bh_cnd}
  Let~\((D, d)\) be a length-space where the distance~\(d\) is CND.\@
  Define~\(\mathcal{C}_D\) as the class of non-empty, bounded sets in~\(D\). Then,
  \begin{enumerate}
  \item \(bh(\cdot, \cdot)\) is CND,
  \item \((\mathcal{C}_D, \sqrt{bh})\) can be embedded on a Hilbert space.
  \end{enumerate}
\end{theorem}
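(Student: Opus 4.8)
The plan is to reduce the conditional negative definiteness of $bh$ to that of the underlying metric $d$, using the closed-form expression from Theorem~\ref{th:bh_ls}. By that theorem, for any $A_1, A_2 \in \mathcal{C}_D$ we may write
\begin{equation*}
  bh(A_1, A_2) = d(\rm{c}(A_1), \rm{c}(A_2)) + \lvert {\rm R}(A_1) - {\rm R}(A_2) \rvert.
\end{equation*}
The key observation is that $bh$ is a sum of two terms, each defined on the respective geometric summaries of the sets. Since a nonnegative linear combination of CND functions is again CND, it suffices to show that each summand is CND when viewed as a function of the sets.

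First I would treat the center term. Given sets $A_1, \ldots, A_m \in \mathcal{C}_D$ with associated Chebyshev centers $\rm{c}(A_1), \ldots, \rm{c}(A_m) \in D$, and real weights $b_1, \ldots, b_m$ summing to zero, the double sum $\sum_{i,j} b_i b_j\, d(\rm{c}(A_i), \rm{c}(A_j))$ is exactly the CND inequality~\eqref{eq:ncd} for $d$ evaluated at the points $\rm{c}(A_i) \in D$. Since $d$ is assumed CND on $D$, this sum is nonpositive. The composition with the map $A \mapsto \rm{c}(A)$ is harmless: conditional negative definiteness is preserved under pulling back along any map into the domain, because the inequality only needs to hold for the finitely many images $\rm{c}(A_i)$. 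Second, for the radius term, I would note that $\lvert {\rm R}(A_1) - {\rm R}(A_2) \rvert$ is the pullback along $A \mapsto {\rm R}(A) \in \mathbb{R}$ of the absolute-value function $(s, t) \mapsto \lvert s - t \rvert$, which is the standard Euclidean metric on $\mathbb{R}^1$. The Euclidean metric on $\mathbb{R}$ is CND (indeed $\lvert s - t \rvert = \lvert s - t \rvert$ is an $\alpha$-norm with $\alpha = 1$, and such norms are CND), so the same pullback argument shows $\sum_{i,j} b_i b_j \lvert {\rm R}(A_i) - {\rm R}(A_j) \rvert \leq 0$.

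Adding the two nonpositive sums yields $\sum_{i,j} b_i b_j\, bh(A_i, A_j) \leq 0$ for every zero-sum choice of weights, which is precisely the statement that $bh$ is CND, establishing part~(1). For part~(2), I would apply Theorem~\ref{th:main} (Schoenberg's theorem): since $bh$ is CND, writing $g = bh$ we have that $g$ is CND, and setting $d^\ast = \sqrt{bh}$ we obtain $(d^\ast)^2 = bh$ is CND. Theorem~\ref{th:main} then guarantees that the pseudometric space $(\mathcal{C}_D, \sqrt{bh})$ admits an isometric embedding into a Hilbert space, giving part~(2).

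The main obstacle I anticipate is a subtle well-definedness issue rather than an analytic difficulty. The map $A \mapsto \rm{c}(A)$ is only well-defined when the Chebyshev center is unique, and the excerpt has already flagged that uniqueness can fail in general length spaces (e.g., the poles of a great circle on $\mathbb{S}^2$). The pullback argument is clean only if $\rm{c}(A)$ is a genuine function of $A$, so I would need either to restrict $\mathcal{C}_D$ to sets with unique Chebyshev centers (as the restrictions discussed before Definition~\ref{def:bh} arrange) or to verify that the value $bh(A_1, A_2)$ is independent of which Chebyshev center is selected, so that the inequality holds for any consistent choice. Once this is settled, the CND inequality for $d$ transfers verbatim, and the remainder is a routine combination of CND functions followed by an invocation of Schoenberg's theorem.
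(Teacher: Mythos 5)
Your proposal is correct and follows essentially the same route as the paper: decompose $bh$ via Theorem~\ref{th:bh_ls}, observe that the center term is CND because $d$ is, that the radius term is CND as the $\alpha=1$ norm on $\mathbb{R}^1$, sum the two nonpositive quadratic forms, and invoke Theorem~\ref{th:main} for the Hilbert-space embedding. Your explicit pullback justification and the flag about Chebyshev-center uniqueness are slightly more careful than the paper's phrasing (which appeals to the convex-cone property of CND functions and handles uniqueness via the restrictions on $\mathcal{C}_D$ stated before Definition~\ref{def:bh}), but the argument is the same.
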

\begin{proof}
  Let us start with (1). From Theorem~\ref{th:bh_ls}, we have
  \[
    bh(A_1, A_2) = d(\rm{c}(A_1), \rm{c}(A_2)) + \lvert {\rm R}(A_1) -
    {\rm R}(A_2) \rvert,
  \]
  where~\(d({\rm c}(A_1), {\rm c}(A_2))\) is CND by assumption. The
  term~\(\lvert {\rm R}(A_1) - {\rm R}(A_2) \rvert\) stands for
  a~\(\alpha\)-norm, with~\(\alpha = 1\), on~\(\mathbb{R}^1\), which is also
  CND~\citep{richards1985positive}. The family of CND functions form a convex
  cone~\citep[pg.~88]{cressie1993statistics}. That is, linear combinations of
  CND functions that employ positive coefficients result in CND
  functions~\citep{kapil2018conditionally}. Thus, under the stated assumptions,
  \(bh(A_1, A_2)\) is CND.\@ The second part of the Theorem follows directly
  from Theorem~\ref{th:main}.
\end{proof}

Combining Theorems~\ref{th:bh_cnd} and~\ref{th:2} allows us to immediately
establish positive definite functions for sets. In particular, for any length
space equipped with a CND distance function, we can define valid positive
definite functions. That follows from the fact that, in those cases, the
square-root transformed ball-Hausdorff distance~(on the class of bounded sets of
these topological spaces) can be embedded on a Hilbert space. Thus, families of
valid isotropic covariance functions on Hilbert spaces~(i.e., functions
belonging to the class~\(\Phi_{\infty}\)) also render valid families for the proposed
distance.


\section{Numerical Examples}\label{sec:examples}

Building on the theoretical results in Section~\ref{sec:main}, we
present numerical examples that validate the proposed ball--Hausdorff
construction by comparing it with the Hausdorff distance, assessing
computational efficiency, and empirically verifying positive definiteness of the
resulting covariance matrices. To this end, we consider two
widely adopted isotropic covariance families: the PEXP family,
\[
  C(h; \phi, \nu) = \exp \left \{ - \frac{h^{\nu}}{\phi^{\nu}}\right \},
\]
and the Mat\'{e}rn family,
\[
  C(h; \phi, \nu) = \frac{2^{\nu - 1}}{\Gamma(\nu)} {\left(\frac{h}{\phi} \right)}^{\nu} K_{\nu}
  \left (\frac{h}{\phi} \right),
\]
where~\(h\) represents the distance between two elements from the index set,
\(\phi\) controls the strength of the spatial dependence, and \(\nu\) governs
smoothness. Both families belong to the class \(\Phi_{\infty}\) and are therefore
valid on Hilbert spaces. For interpretability, dependence is summarized using
the practical range \(\rho\), defined as
the distance where correlation decays to 0.1. For the PEXP family,
\(\rho = {\log(10)}^{1 / \nu} \phi\)~\citep{godoy2026statistical}, while for the
Mat\'{e}rn family,
\(\rho \approx \phi \sqrt{8 \nu}\)~\citep{lindgren2011explicit}. These models
form the basis for distance comparisons and eigenvalue-based diagnostics that
follow.

\begin{figure}[tb]
  \centering
  \includegraphics{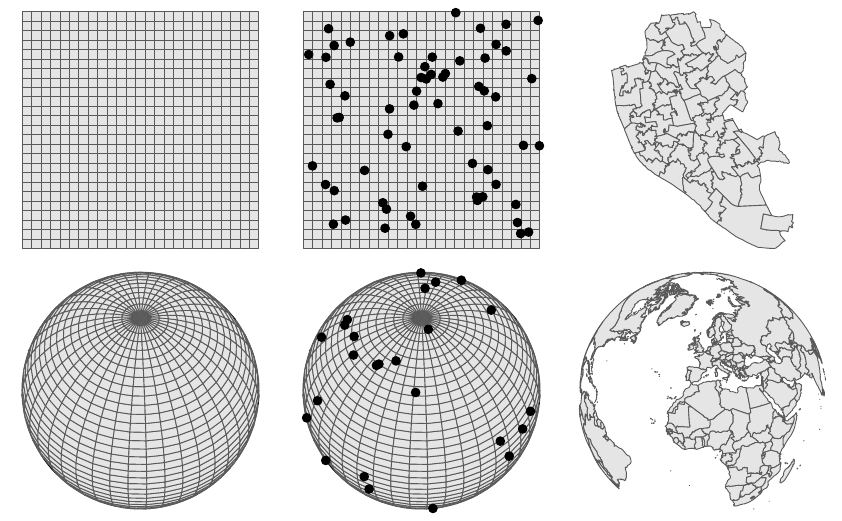}
  \caption{Synthetic spatial configurations used in the numerical examples.
    Rows correspond to the Euclidean plane \((\mathbb{R}^2)\) and the sphere
    \((\mathbb{S}^2)\), while columns represent gridded, fused, and areal
    settings. These examples are designed to span increasing spatial complexity
    across domains commonly encountered in multi-resolution spatial analyses.}
  \label{fig:maps}
\end{figure}

To empirically evaluate the proposed framework, we consider synthetic datasets
designed to reflect three common spatial configurations encountered in practice:
gridded data, typical of satellite-derived products; fused data, combining
point-referenced and areal observations; and areal data consisting of irregular
polygons. These configurations are constructed on both the Euclidean plane
\((\mathbb{R}^2)\) and the sphere \((\mathbb{S}^2)\) to assess performance across
distinct geometric domains (Figure~\ref{fig:maps}). Using these settings, our
analysis pursues three objectives: first, to assess empirical agreement between
the proposed ball-Hausdorff distance and the standard Hausdorff distance;
second, to benchmark computational efficiency arising from the closed-form
representation of the proposed distance; and third, to verify positive
definiteness of covariance matrices constructed under the ball-Hausdorff
metric, in contrast to those based on the Hausdorff distance.

\begin{figure}[tb]
  \centering
  \includegraphics{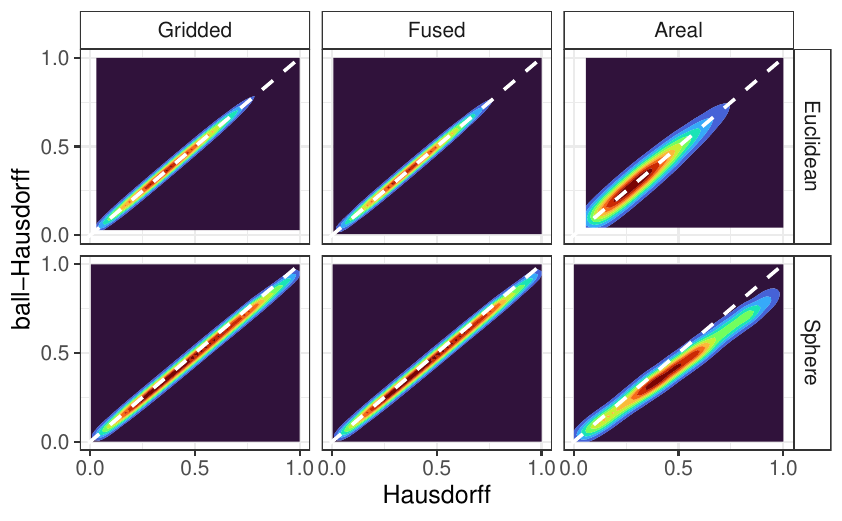}
  \caption{wo-dimensional kernel density estimates comparing pairwise Hausdorff
    distances (x-axis) and ball--Hausdorff distances (y-axis) across spatial
    configurations and domains. Columns correspond to gridded, fused, and areal
    settings, while rows represent the Euclidean plane and the sphere. Distances
    are rescaled to the unit interval within each scenario. Concentration of
    mass along the dashed \(y = x\) line indicates close empirical agreement
    between the two distance measures across increasing spatial complexity.
  }
  \label{fig:pairs}
\end{figure}

We computed pairwise distance matrices for all configurations shown in
Figure~\ref{fig:maps} using both the ball-Hausdorff and Hausdorff distances.
Distances were rescaled to the unit interval by dividing by the maximum observed
distance within each scenario, so that comparisons reflect relative rather than
absolute scale differences across spatial configurations. Figure~\ref{fig:pairs}
displays two-dimensional kernel density estimates comparing the resulting
distance values. The concentration of mass along the \(y = x\) diagonal
indicates close empirical agreement between the two distances, with no
systematic curvature or dispersion as spatial complexity increases from gridded
to fused and irregular areal units. In particular, the near-monotone
relationship across all configurations suggests that the ball-Hausdorff distance
preserves the relative proximity structure induced by the Hausdorff distance,
despite relying on a substantially reduced geometric representation.

The two distances differ substantially in computational cost. Evaluation of the
Hausdorff distance requires identifying extremal point-to-point discrepancies
between spatial sets, which becomes increasingly expensive as set complexity
and resolution increase. In contrast, the ball-Hausdorff distance admits a
closed-form expression that depends only on Chebyshev centers and radii
(Theorem~\ref{th:bh_ls}), eliminating the need for pairwise pointwise
comparisons. This structural reduction leads to substantial computational
savings. In our experiments, evaluation based on the ball-Hausdorff distance was
between 11 and 183 times faster than evaluation based on the Hausdorff distance,
with the largest speedups observed for areal data on the sphere.

We conclude the numerical examples with an eigenvalue-based comparison of
covariance matrices constructed using the ball-Hausdorff and standard Hausdorff
distances. Two covariance families are considered: the PEXP family with
\(\nu = 1\) and the Mat\'{e}rn family with \(\nu = 2.5\). For the Mat\'{e}rn family,
covariance matrices are constructed using square-root transformed distances, as
required by Theorem~\ref{th:2}. The practical range parameter is varied over
\(\rho \in [0.8, 1.5]\), where the differences between the positive definiteness of
the resulting covariance matrices from the two distances are notable. The
smallest eigenvalue of each resulting covariance matrix, denoted \(\lambda_1\), is
used to assess lack of positive definiteness, with negative values indicating
invalid covariance structures.

\begin{figure}[p]
  \centering
  \includegraphics{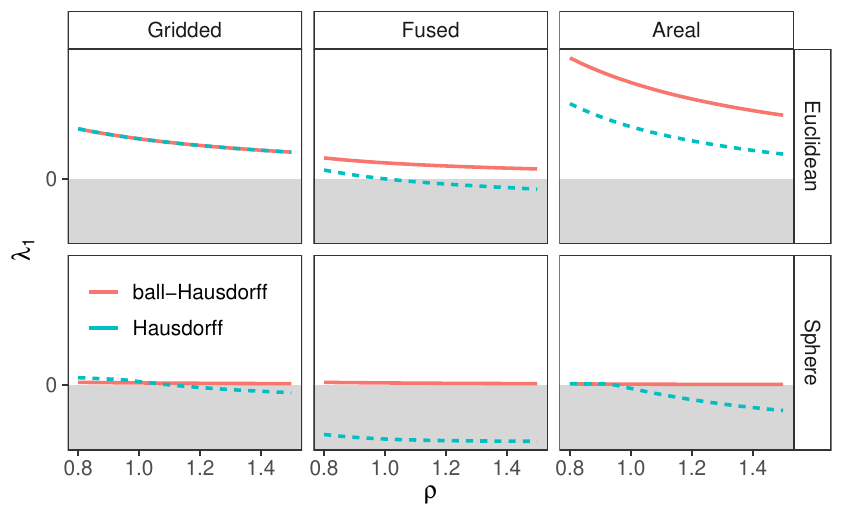}
  \includegraphics{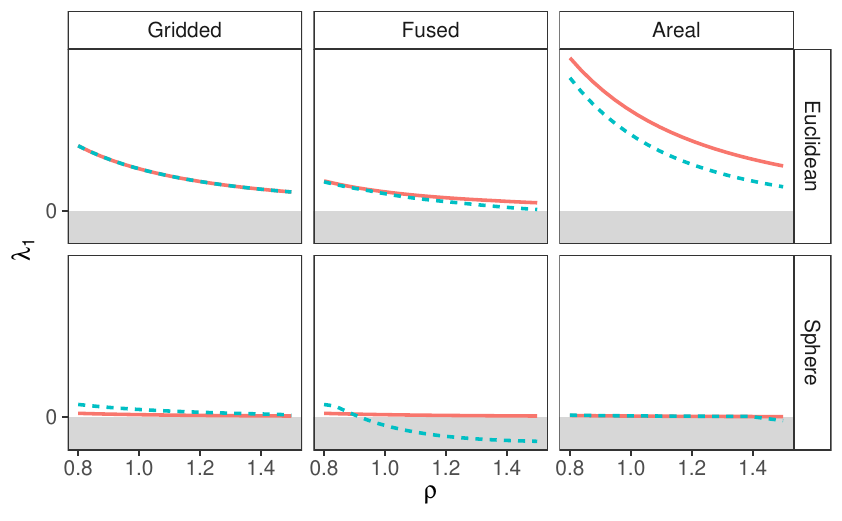}
  \caption{Eigenvalue diagnostics for covariance matrices constructed using the
    PEXP family with \(\nu = 1\) (top row) and the Mat\'{e}rn family with
    \(\nu = 2.5\) (bottom row). For each panel, the smallest eigenvalue
    \(\lambda_1\) is plotted as a function of the practical range \(\rho\)
    across spatial configurations (columns) and domains (rows). Solid red lines
    correspond to covariance matrices based on the ball-Hausdorff distance,
    while dashed blue lines correspond to those based on the Hausdorff
    distance. For the Mat\'{e}rn family, square-root transformed distances are
    used as required by Theorem~\ref{th:2}. The gray shaded region indicates
    invalid (non-positive definite) covariance structures    
    (\(\lambda_1 < 0\)).}
  \label{fig:sev}
\end{figure}

Figure~\ref{fig:sev} summarizes the resulting eigenvalue diagnostics across
spatial configurations and domains. In all cases considered, covariance matrices
based on the ball-Hausdorff distance remain positive definite, with
\(\lambda_1 \ge 0\) throughout the examined range of \(\rho\). In contrast, covariance
matrices constructed using the standard Hausdorff distance frequently exhibit
negative eigenvalues, even when square-root transformation is applied for the
Mat\'{e}rn family, a reported limitation of Hausdorff-based constructions in
set-indexed settings~\citep{godoy2026statistical}.  These violations are most
pronounced for fused data and for configurations defined on the
sphere. Collectively, these results serve as a final numerical confirmation that
replacing the Hausdorff distance with the ball-Hausdorff distance leads to
systematically improved validity of covariance constructions in set-indexed
spatial settings.

\section{Discussion}\label{sec:disc}

We introduced the ball-Hausdorff distance, a new distance function for spatial
sets, and established its theoretical properties in set-indexed settings. Under
the assumption that the underlying domain is a length space, the proposed
distance admits a closed-form expression that depends only on the Chebyshev
centers and radii of the sets. This structural reduction allows conditional
negative definiteness of the base metric to be inherited by the ball-Hausdorff
distance, which in turn guarantees the validity of broad classes of isotropic
covariance functions through Hilbert-space embeddings. Our numerical results
demonstrate that, although the classic Hausdorff distance provides an intuitive
notion of discrepancy between sets, it frequently fails to yield valid
covariance matrices, particularly on non-Euclidean domains such as the
sphere. In contrast, the ball-Hausdorff distance ensures theoretical validity
while remaining compatible with commonly used covariance families.

The closed-form expression induced by the geometric reduction of the
ball--Hausdorff distance leads directly to substantial computational
simplification. By representing each set through its center and radius,
evaluation of the distance avoids the extremal point-to-point calculations
required by the Hausdorff distance. The resulting computational gains observed
in our numerical studies, including speedups of over 180 times in areal data
configurations, follow from this structural simplification rather than from
approximation or numerical shortcuts. As a consequence, the proposed framework
remains applicable in large-scale settings involving many spatial units or
high-resolution partitions, enabling covariance modeling over large collections
of spatial sets without the need for computationally intensive distance
calculations or numerical evaluation of stochastic integrals.

The application of covariance families under the proposed framework depends on
the scale at which the ball--Hausdorff distance is transformed to preserve
positive definiteness. While the PEXP family with~\(\nu \in (0,1]\) can be applied
directly to the ball-Hausdorff distance, other families, such as the Mat\'ern
class, require the square-root transformed distance to ensure theoretical
validity (Theorem~\ref{th:2}). Because the square-root is a monotone
transformation, inference on range-related parameters remains identifiable on
the original distance scale. In frequentist settings, invariance of the maximum
likelihood estimator permits reporting the practical range on the original
scale, with uncertainty quantified, for example, via the delta method. In
Bayesian analyses, the same transformation may be applied to posterior samples
or handled explicitly through a Jacobian adjustment. These properties allow
direct construction of valid set-indexed Gaussian processes for problems such as
spatial data fusion and change-of-support, while still accommodating stochastic
integrals when scientifically appropriate.

Several theoretical questions remain open. One concerns the mean-square
differentiability of random fields defined through the proposed covariance
construction. For the standard Mat\'ern family, differentiability is governed by
the smoothness parameter~\citep[pg.~32]{stein1999interpolation}, but extending
this notion to set-indexed or manifold-indexed processes requires more general
concepts of differentiation~\citep{khastan2021new}. This issue is non-trivial,
even for Gaussian processes on spheres~\citep{alegria2021f}, and while recent
work has defined mean-square differentiability for Gaussian processes on
manifolds using smooth paths~\citep{lin2019extrinsic}, a systematic treatment in
the present setting remains to be developed. A further limitation arises from
the pseudometric nature of the ball--Hausdorff distance: distinct sets with
identical centers and radii are indistinguishable, which may lead to singular
covariance matrices in degenerate cases. As in many reduced-representation
models, this issue can be addressed by introducing a nugget effect, a standard
regularization in geostatistical modeling that accounts for measurement error
and unresolved small-scale variability~\citep[pg.~59--60]{cressie1993statistics}.


\section*{Acknowledgments}

Fernando A. Quintana was partially funded by grant ANID Fondecyt Regular
1220017. Marcos O. Prates acknowledges the research grants obtained from
CNPq-Brazil (309186/2021-8), FAPEMIG (APQ-01837-22, APQ-01748-24), and CAPES,
respectively, for partial financial support.


\bibliographystyle{apalike}
\bibliography{refs}
\appendix

\section{Length and geodesic spaces}\label{app:length}

A particularly important class of metric spaces are length spaces, where the
distance is determined by paths connecting points. A path \(\gamma\) in a space
\((D, d)\) is a continuous map~\(\gamma \, : \, I \to D\) defined on any connected
subset of the real line, that is, an interval~\(I \subset \mathbb{R}\). A length structure
consists of a class of ``admissible paths''~(e.g., all continuous paths) for
which we can measure a non-negative length.
\begin{definition}[Definition 2.3.1 in \citet{burago2001course}]
  Let~\((D, d)\) be a metric space and~\(\gamma\) a continuous
  map~\(\gamma \, : \, [a, b] \to D\). Consider a
  partition~\(Y = \{y_0, \ldots, y_N\}\) of~\([a, b]\) such
  that~\(a = y_0 \leq y_1 \leq y_2 \leq \cdot \leq y_N = b\). Then, the length
  of~\(\gamma\)~(with respect to~\(d\)) is
  \[
    L_d(\gamma) = \sup_{Y} \Sigma(Y),
  \]
  where
  \[
    \Sigma(Y) = \sum_{i = 1}^N d(\gamma(y_{i - 1}), \gamma(y_{i})).
  \]
\end{definition}
Note that, the length of a path is additive and for further properties, we refer
to~\citet{burago2001course}. Now, we may formalize a length space as follows. In
our context, a length space is a metric space such that the associated notion of
distance is given by the infimum of lengths of all admissible paths between any
two elements~\(x, y \in D\). That is,
\[
  d_L(x, y) = \inf \{ L_d(\gamma) \, ; \, \gamma : [a, b] \to D, \, \gamma \in A, \gamma(a) = x, \gamma(b) =
  y \}.
\]
When the construction above induces the initial metric~\(d\)~(i.e.,
\(d_L = d\)), then it is said to be intrinsic~\citep{burago2001course}.

Although our primary motivation for working with length spaces is theoretical,
they are broad enough to encompass the practical statistical challenges we
address. Prominent examples include all normed vector spaces~\citep[Proposition
1.6 from][]{bridson1999metric}, piecewise smooth paths on manifolds, and broken
lines in~\(\mathbb{R}^n\)~\citep[pg~27]{burago2001course}. A special case is the geodesic
space, where the distance infimum is always attained by a specific path, known
as a geodesic or shortest path~\citep[Ch.~I.3]{bridson1999metric}. While every
geodesic space is a length space, the converse is not always
true~\citep[Ch.~I.3]{bridson1999metric}. It is also worth noting that shortest
paths need not be unique.

\section{An upper bound for the ball-Hausdorff distance}\label{app:upper}

In this section, we utilize an alternative formulation of the Hausdorff
distance. Specifically, contrasting with Equation~\eqref{eq:hausdist}, we
express the metric as~\citep{arutyunov2016some,
  barrera2021geodesic}
\begin{equation} \label{eq:hausdist2}
  d_{h}(A_1, A_2) = \max \left \{ \sup_{x \in A_1}
    \inf_{y \in A_2} d(x, y), \sup_{x \in A_2} \inf_{y \in A_1} d(x, y) \right \}.
\end{equation}
A closed-form upper bound for the ball-Hausdorff distance is available under the
mild assumption that~\(d\) is a pseudometric.
\begin{lemma}\label{lm:bh_upper}
  Suppose~\((D, d)\) is a pseudometric space. Define \(\mathcal{C}_D\) as the class of
  non-empty, bounded sets in~\(D\). Let~\(A_i \in \mathcal{C}_D\),
  with~\(i = 1, 2\). Then, an upper-bound for the ball-Hausdorff distance is as
  follows
  \[
    bh(A_1, A_2) \leq
    d({\rm c}(A_1), {\rm c}(A_2)) + \max \{{\rm R}(A_1), {\rm R}(A_2)\}.
  \]
\end{lemma}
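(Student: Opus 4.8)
The plan is to bound the Hausdorff distance between the two minimal enclosing balls directly, without appealing to the linear expansion of balls (Lemma~\ref{lm:ballball}), since that identity relies on the length-space structure, which is unavailable under the weaker pseudometric hypothesis here. Write \(x = {\rm c}(A_1)\), \(y = {\rm c}(A_2)\), \(r = {\rm R}(A_1)\), and \(k = {\rm R}(A_2)\), so that \(bh(A_1, A_2) = d_{h}(\B_{r}(x), \B_{k}(y))\). Because we can no longer rewrite \(\B_{m}(\B_{k}(y))\) as \(\B_{m+k}(y)\), the payoff is an inequality rather than the exact expression of Theorem~\ref{th:bh_ls}.

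I would work from the two-sided formulation of the Hausdorff distance in Equation~\eqref{eq:hausdist2}, treating the two directed suprema separately. For the first directed term, fix an arbitrary \(u \in \B_{r}(x)\) and use the opposite center \(y\) as a witness in the inner infimum; the triangle inequality then gives \(\inf_{v \in \B_{k}(y)} d(u, v) \leq d(u, y) \leq d(u, x) + d(x, y) \leq r + d(x, y)\). Taking the supremum over \(u\) yields \(\sup_{u \in \B_{r}(x)} \inf_{v \in \B_{k}(y)} d(u, v) \leq d(x, y) + r\). By symmetry, interchanging the roles of the two balls produces the analogous bound \(d(x, y) + k\) for the reverse directed term.

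Combining the two directed bounds through the outer maximum then gives \(d_{h}(\B_{r}(x), \B_{k}(y)) \leq \max\{d(x, y) + r, d(x, y) + k\} = d(x, y) + \max\{r, k\}\), which is the claimed inequality once the centers and radii are substituted back. The entire argument uses only the pseudometric axioms, consistent with the weaker hypothesis of the lemma relative to Theorem~\ref{th:bh_ls}.

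I do not anticipate a substantive obstacle; the argument is elementary. The only point deserving care is the choice of witness: the opposite center must genuinely belong to the enclosing ball it centers, which holds by construction of \(\mathcal{B}(\cdot)\). Conceptually, the essential feature is that the result is merely an inequality, reflecting the loss of tightness incurred when the term \(\lvert r - k \rvert\) of Theorem~\ref{th:bh_ls} is replaced by \(\max\{r, k\}\) in the absence of linear ball expansion.
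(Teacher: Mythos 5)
Your proof is correct and follows essentially the same route as the paper's own proof: both work from the directed sup--inf formulation in Equation~\eqref{eq:hausdist2}, bound each directed term by taking the opposite ball's center as the witness in the inner infimum and applying the triangle inequality, and then combine the two bounds through the outer maximum to obtain \(d({\rm c}(A_1), {\rm c}(A_2)) + \max\{{\rm R}(A_1), {\rm R}(A_2)\}\). Your added observation that the argument deliberately avoids Lemma~\ref{lm:ballball} (and hence the length-space assumption) is consistent with the paper, whose proof likewise uses only the pseudometric axioms.
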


\begin{proof}
  Let~\(\mathcal{B}(A_1) = B_{r_1}(c_1)\)
  and~\(\mathcal{B}(A_2) = B_{r_2}(c_2)\), where~\(c_i = c(A_i)\)
  and~\(r_i = R(A_i)\) denote the unique centers and Chebyshev radii,
  respectively. By definition,
  \(bh(A_1, A_2) = d_h(B_{r_1}(c_1), B_{r_2}(c_2))\).

  First, consider the directed distance from~\(B_{r_1}(c_1)\)
  to~\(B_{r_2}(c_2)\). For any point~\(x \in B_{r_1}(c_1)\), we
  have~\(d(x, c_1) \le r_1\). The distance from~\(x\) to the
  set~\(B_{r_2}(c_2)\) satisfies:
  \begin{equation}
    d(x, B_{r_2}(c_2)) = \inf_{y \in B_{r_2}(c_2)} d(x, y).
  \end{equation}
  Since the center~\(c_2\) is contained in~\(B_{r_2}(c_2)\), it follows that:
  \begin{equation}
    d(x, B_{r_2}(c_2)) \le d(x, c_2).
  \end{equation}
  Applying the triangle inequality:
  \begin{equation}
    d(x, c_2) \le d(x, c_1) + d(c_1, c_2) \le r_1 + d(c_1, c_2).
  \end{equation}
  Taking the supremum over all~\(x \in B_{r_1}(c_1)\):
  \begin{equation}
    \sup_{x \in B_{r_1}(c_1)} d(x, B_{r_2}(c_2)) \le d(c_1, c_2) + r_1.
  \end{equation}
  By symmetry, the directed distance from~\(B_{r_2}(c_2)\) to~\(B_{r_1}(c_1)\)
  satisfies:
  \begin{equation}
    \sup_{y \in B_{r_2}(c_2)} d(y, B_{r_1}(c_1)) \le d(c_1, c_2) + r_2.
  \end{equation}
  Combining these results into the definition of the Hausdorff distance:
  \begin{align*}
    bh(A_1, A_2)
    & = \max \left\{ \sup_{x \in B_{r_1}(c_1)} d(x, B_{r_2}(c_2)), \sup_{y \in B_{r_2}(c_2)} d(y, B_{r_1}(c_1)) \right\} \\
    & \le \max \{ d(c_1, c_2) + r_1, d(c_1, c_2) + r_2 \} \\
    & = d(c_1, c_2) + \max \{ r_1, r_2 \} \\
    & = d(c(A_1), c(A_2)) + \max \{ R(A_1), R(A_2) \}.
  \end{align*}
  This completes the proof.
\end{proof}

\end{document}